\documentclass[11pt]{article}
\usepackage[utf8]{inputenc}
\usepackage[T1]{fontenc}
\usepackage{times}
\usepackage{amsmath,amssymb,amsthm}
\usepackage{graphicx}
\usepackage{booktabs}
\usepackage{array}
\usepackage[hidelinks]{hyperref}
\usepackage{geometry}
\usepackage{enumitem}
\usepackage{authblk}
\usepackage{caption}
\newtheorem{proposition}{Proposition}

\title{\textbf{Feature Freshness Budgets for Real-Time ML Inference\\ Under Stream Lag}}
\author[]{Amit Rajula}
\affil[]{Independent Researcher \\ \texttt{ORCID: 0009-0006-8025-6827}}
\date{}

\begin{document}
\maketitle

\begin{abstract}
\noindent Online feature stores decouple feature computation from model serving, materializing features from upstream event streams into a low-latency store that inference reads at request time. This decoupling introduces a freshness gap---the interval between an event occurring at the source and its effect becoming visible in the served feature vector---that is widely documented as a cause of training--serving skew but has no formal treatment as a bounded, cost-quantified quantity. We model the online feature store as a materialized view over an event stream and define a per-feature freshness budget: the difference between the decision window a feature is consumed within and the staleness the serving pipeline imposes on it. We prove two propositions---a per-feature staleness bound, and a closed-form threshold below which a feature can never be served within budget---and validate them in a seeded, reproducible discrete-event simulation and on a real Apache Kafka, Redis, and PostgreSQL pipeline. The predicted collapse threshold reproduces on real infrastructure, with the decision-error transition landing where the closed form predicts. Reconciling simulated and real drop rates surfaces a methodological result of independent interest: the dominant simulation-to-infrastructure gap is not latency but a sampling-phase artifact---a simulation whose request clock and recomputation cadence are phase-locked systematically under-observes staleness. Randomizing that phase reconciles simulation and infrastructure to within 0.018 mean absolute error in per-feature drop rate. Results are scoped to the studied workloads and configurations and are not claims about any specific production feature-store implementation.

\vspace{0.5em}
\noindent\textbf{Keywords:} data freshness, feature stores, real-time inference, materialized views, staleness bounds, stream processing
\end{abstract}

\section{Introduction}
Modern machine-learning systems rarely score raw data. Between the source of truth---a stream of transactions, clicks, or sensor readings---and the model that consumes it sits a feature store: a serving layer that materializes derived quantities (a rolling count, a windowed average, a velocity) and exposes them to inference at low latency. This architecture is now standard practice, and for good reason: it decouples the expensive, stateful work of feature computation from the latency-sensitive path of model serving.

That decoupling has a cost that the literature names but does not quantify. A served feature is a copy, computed from a source that continues to move after the copy is made. The interval between an event occurring at the source and its effect appearing in the served feature vector is a freshness gap, and a model that scores a feature whose gap has grown too large is scoring the past. In practice this manifests as training--serving skew, silent prediction degradation, and---in domains such as fraud detection or real-time bidding---direct monetary loss. The pipeline reports success; only the decision is wrong.

The prevailing engineering treatment of this problem is binary: a feature is either fresh enough, in which case skew is ``prevented,'' or the pipeline is broken. Point-in-time-correct joins, single feature-computation code paths, and monitoring dashboards all encode this all-or-nothing framing. What is missing is a quantitative one---a way to reason about how much staleness a feature carries, how much a given decision can tolerate, and what it costs to serve a feature that has exceeded its tolerance rather than block on it.

This paper supplies that quantitative treatment. We make four contributions:
\begin{enumerate}[leftmargin=1.4em]
\item We model the online feature store as a materialized view over an upstream event stream and define a \emph{per-feature freshness budget}---a bounded, tunable quantity that separates the decision window a feature is consumed within from the staleness the pipeline imposes on it (Section~\ref{sec:model}).
\item We prove two propositions: a per-feature staleness bound (Proposition~\ref{prop:bound}), and a closed-form permanent-drop threshold (Proposition~\ref{prop:threshold}) below which a feature can never be served within budget, determined entirely by the feature's pipeline parameters and independent of any fallback policy (Section~\ref{sec:theory}).
\item We validate both propositions in a seeded, reproducible discrete-event simulation and on a real Apache Kafka, Redis, and PostgreSQL serving pipeline, and show that the predicted collapse threshold reproduces on real infrastructure (Sections~\ref{sec:sim}--\ref{sec:infra}).
\item In reconciling simulated and real drop rates we identify a sampling-phase artifact that causes phase-locked simulations to systematically under-observe staleness, and show that randomizing the phase between sampling and recomputation reconciles the two to within 0.018 mean absolute error (Section~\ref{sec:reconcile}). We argue this is a general methodological requirement for freshness simulation.
\end{enumerate}

We are explicit about scope. All quantitative results are drawn from the studied synthetic workloads and pipeline configurations; they characterize the mechanisms above rather than any specific production feature store, and the decision rule is a fixed threshold rather than a trained model, so that staleness is isolated as the independent variable.

\section{Background and Related Work}
\subsection{Feature stores and training--serving skew}
Feature stores separate an offline store, used to build training sets, from an online store, read at inference time~\cite{michelangelo,hopsworks}. The central correctness concern in this literature is training--serving skew: a discrepancy between the feature values a model saw during training and those it sees in production. Point-in-time-correct joins address one source of skew---temporal leakage in constructing training labels---and shared feature-computation logic addresses another---divergent offline and online code paths~\cite{hopsworks}. Both are correctness mechanisms; neither quantifies the residual staleness that remains once they are correctly applied. Our work is complementary: we take the online value as given and ask how stale it is, how much that costs, and when a budgeted policy for tolerating staleness helps or fails.

\subsection{Freshness, staleness, and materialized views}
Data freshness has a long lineage in the database and distributed-systems literatures. Analyses of freshness metrics distinguish currency, timeliness, and related notions; work on materialized-view maintenance treats a derived relation as something that must be refreshed against a changing base, trading refresh cost against view staleness~\cite{gupta-mumick,mv-book,ivm}. Probabilistic treatments of replica staleness give expected bounds on how far a replica trails its source under quorum replication~\cite{pbs-vldb,pbs-vldbj}. We adopt the materialized-view framing directly: an online feature is a view over an event stream, and its freshness gap is the age of the most recent source event not yet reflected in the served value. Recent empirical work decomposes end-to-end streaming latency into per-stage intervals---ingestion, in-engine processing, and output-to-visibility delay---and shows that windowed workloads concentrate delay in the wait for a window to close rather than in computation~\cite{kumar-streaming}. This directly motivates our per-feature treatment: the staleness a feature carries is dominated by its aggregation window, not by the cost of computing it, which is what makes window-based features the first to exhaust their budget. Our contribution relative to this lineage is to make the gap a per-feature, cost-quantified budget with a provable collapse threshold, and to test it on real serving infrastructure.

\subsection{Age of Information}
The networking and control literature formalizes exactly the gap we study as the Age of Information: the time elapsed since the freshest received update was generated~\cite{aoi-update,aoi-survey}. That work optimizes sampling and transmission policies against nonlinear functions of age~\cite{aoi-lecture}. We share its cost-oriented stance---treating the gap as something to be bounded and priced rather than merely prevented---but apply it to the feature-store setting, where the derived quantity is a windowed aggregate and the consumer is a decision rule with its own tolerance.

\section{System Model and the Freshness Budget}
\label{sec:model}
We consider a single entity whose activity produces an event stream. A set of features is derived from that stream; each feature is a windowed aggregate (for concreteness, a count of events in a trailing window). Two views of each feature exist.

The \emph{oracle} is the true value of the feature as of the current instant, computed with zero lag directly from the complete event log. It is the value a perfect, instantaneous pipeline would serve, and it is the ground truth against which decisions are judged.

The \emph{served} value is what inference actually reads. It is recomputed on a cadence equal to the feature's aggregation window, and it reflects only events up to a watermark that trails the present by the ingestion lag. A read additionally incurs a scheduling delay. For a feature $f$ consumed at request time $t$, we define its served age as the elapsed time since the watermark of the value in the store:
\begin{equation}
\mathrm{age}(f, t) = (t + d_{\mathrm{sched}}) - \mathrm{watermark}(f),
\end{equation}
where $d_{\mathrm{sched}}$ is the scheduling delay. Each feature is consumed within a \emph{decision window} $D_f$---a property of the task, expressing how much staleness the decision can tolerate before it is likely wrong. This is distinct from the aggregation window, which is a property of the pipeline. The two must not be conflated: the object of interest is the relationship between the staleness the pipeline imposes and the tolerance the task allows. We define the \textbf{per-feature freshness budget} as
\begin{equation}
B_f(t) = D_f - \mathrm{age}(f, t).
\end{equation}
A feature is \emph{in budget} when $B_f(t) \ge 0$ and \emph{over budget} otherwise. The budget is the paper's central object: a bounded, tunable resource that is spent by staleness and can be exhausted. It reframes freshness from a binary property to be preserved into a quantity to be allocated and priced.

We study three policies for consuming a possibly-stale feature. \textbf{B0 (baseline)} serves whatever is in the store with no budget check. \textbf{T1 (block-on-slowest)} blocks the entire request until every feature is within its decision window, or until a timeout. \textbf{T2 (budgeted)} serves each feature independently against its own budget; a feature that is over budget is dropped and replaced by a fallback value (its last in-budget value), rather than blocking the request. The multi-feature setting is essential: T1 and T2 differ only when a request depends on several features with heterogeneous lag, so that blocking on the slowest is genuinely costly and isolating the completeness loss to one feature is genuinely beneficial.

\section{Theoretical Results}
\label{sec:theory}
Both propositions below are asserted in-loop on every read of every simulation run, so that any violation surfaces as an immediate failure rather than a silent possibility.

\begin{proposition}[Staleness bound]
\label{prop:bound}
For a feature $f$ served at time $t$, the served age is bounded by the sum of the pipeline's three latency terms:
\[ \mathrm{age}(f, t) \le L_f + W_f + d_{\mathrm{sched}}, \]
where $L_f$ is the ingestion lag, $W_f$ the aggregation window, and $d_{\mathrm{sched}}$ the scheduling delay.
\end{proposition}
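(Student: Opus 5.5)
The plan is to make the pipeline model of Section~\ref{sec:model} explicit enough that the watermark of the stored value can be bounded below in terms of the request time. Then I substitute that bound into the definition $\mathrm{age}(f,t) = (t + d_{\mathrm{sched}}) - \mathrm{watermark}(f)$. The served value is recomputed on a cadence equal to $W_f$. Let the recomputation instants be $\tau_k = \tau_0 + kW_f$, and let the value in the store at time $t$ be the one produced at the most recent instant $\tau_k \le t$. By construction $t - \tau_k < W_f$, so $t - \tau_k \le W_f$. This is the first step: bounding the time elapsed since the last refresh by one aggregation window.

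The second step bounds the watermark carried by the value computed at $\tau_k$. By the model, a recomputation reflects only events up to a watermark trailing the present by the ingestion lag. So $\mathrm{watermark}(f) = \tau_k - \ell_k$, where $\ell_k$ is the realized lag at that recomputation, and we take $L_f$ to be an upper bound $\ell_k \le L_f$. If the paper treats $L_f$ as a fixed constant, this holds with equality. Hence $\mathrm{watermark}(f) \ge \tau_k - L_f$.

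Combining the two steps gives
\[
\mathrm{age}(f,t) = t + d_{\mathrm{sched}} - \mathrm{watermark}(f) \le t + d_{\mathrm{sched}} - \tau_k + L_f \le W_f + L_f + d_{\mathrm{sched}},
\]
which is the claim. The scheduling delay enters additively, directly from the definition of age, and needs no separate argument.

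The main obstacle is the modelling assumptions rather than the algebra. The bound needs three things to hold:
\begin{itemize}[leftmargin=1.4em]
\item Every recomputation completes and becomes visible at its scheduled instant, with no missed or delayed refresh. Otherwise $t-\tau_k$ could exceed $W_f$.
\item The watermark is monotone.
\item The realized lag never exceeds $L_f$. If lag is stochastic, the proposition must be read with $L_f$ as its supremum or a per-run maximum.
\end{itemize}
I would state these explicitly as the hypotheses under which the bound holds. I would also note that visibility delay after recomputation is absorbed into either $L_f$ or $d_{\mathrm{sched}}$. That choice keeps the decomposition into exactly three terms honest.
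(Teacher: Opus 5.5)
Your proof is correct and follows the paper's argument. It bounds the time since the last recomputation by $W_f$, bounds the watermark lag of that recomputation by $L_f$, and takes $d_{\mathrm{sched}}$ additively from the definition of age; your explicit recomputation grid and hypotheses (on-schedule refreshes, realized lag at most $L_f$) make precise what the paper leaves implicit. Your first hypothesis is also exactly the one that fails in the Proposition~1 violations the paper attributes to host scheduling stalls on real infrastructure.
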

\noindent\textit{Proof.} The watermark of the most recent recomputation trails the present by at most $L_f$ (the ingestion lag) plus $W_f$ (the maximum time since the last recomputation, given a recomputation cadence of one aggregation window). The read adds at most $d_{\mathrm{sched}}$. The sum bounds the age. \hfill$\square$

\begin{proposition}[Permanent-drop threshold]
\label{prop:threshold}
Under a uniform budgeting policy that sets $D_f = m \cdot (L_f + W_f + d_{\mathrm{sched}})$ for a multiplier $m$, a feature $f$ can never be served in budget when $m < m^*_f$, where
\[ m^*_f = \frac{L_f + d_{\mathrm{sched}}}{L_f + W_f + d_{\mathrm{sched}}}. \]
\end{proposition}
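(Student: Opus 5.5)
The argument pairs the upper bound of Proposition~\ref{prop:bound} with a matching \emph{lower} bound on the served age. The point is that a feature can never be in budget if its age is always at least some floor $A_f$ and $D_f < A_f$. So the plan has three steps: establish the floor, compare it with $D_f = m(L_f + W_f + d_{\mathrm{sched}})$, and solve for $m$.

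\textbf{Step 1 (lower bound on age).} By the system model, the served value reflects only events up to a watermark that trails the present by the ingestion lag. At any recomputation instant $s \le t$, the watermark is therefore at most $s - L_f \le t - L_f$. Substituting into the definition of age gives
\[
\mathrm{age}(f,t) = (t + d_{\mathrm{sched}}) - \mathrm{watermark}(f) \ge L_f + d_{\mathrm{sched}}
\]
for every $t$. The aggregation-window term contributes nonnegatively: it equals the time since the last recomputation, which can be zero if the read lands exactly on a recomputation. So the floor is attained only in that best-phase case, and $A_f = L_f + d_{\mathrm{sched}}$.

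\textbf{Step 2 (comparison).} Being in budget requires $B_f(t) \ge 0$, that is, $D_f \ge \mathrm{age}(f,t) \ge L_f + d_{\mathrm{sched}}$. Under the uniform policy this reads $m(L_f + W_f + d_{\mathrm{sched}}) \ge L_f + d_{\mathrm{sched}}$. The denominator is positive, so this rearranges to $m \ge m^*_f$. Contrapositively, if $m < m^*_f$ then $D_f < A_f \le \mathrm{age}(f,t)$ for every $t$, so $B_f(t) < 0$ always and the feature is permanently dropped under T2.

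Two remarks follow directly. First, the fallback policy enters nowhere in this derivation, which is why the threshold is independent of it. Second, combining with Proposition~\ref{prop:bound} shows that $m^*_f \le 1$, with $m \ge 1$ guaranteeing the feature is always in budget.

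\textbf{Main obstacle.} The delicate step is Step 1: justifying that the watermark never exceeds $t - L_f$, and that $d_{\mathrm{sched}}$ is a fixed nonnegative delay rather than one that could be negative or variable. If $d_{\mathrm{sched}}$ or $L_f$ is random, I would interpret them as lower envelopes (minimum lag and delay), so that the floor still holds pointwise.

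I would also note that the statement is one-directional. It claims only impossibility below $m^*_f$, not that the feature is always servable above it. So no tightness argument is required, although Step 1 indicates the threshold is sharp in the best-phase case.
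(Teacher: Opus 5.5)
Your proposal is correct and follows essentially the same route as the paper. Both establish the age floor $L_f + d_{\mathrm{sched}}$, attained immediately after a recomputation, note that $D_f$ below this floor makes every read over budget, and rearrange $m(L_f + W_f + d_{\mathrm{sched}}) < L_f + d_{\mathrm{sched}}$ into $m < m^*_f$; your Step 1 just spells out the watermark argument ($\mathrm{watermark}(f) = s - L_f$ with $s \le t$) that the paper asserts in one line.
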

\noindent\textit{Proof.} The minimum achievable age of any served read is $L_f + d_{\mathrm{sched}}$, attained at the instant immediately following a recomputation. If the decision window $D_f$ is smaller than this floor---that is, if $m \cdot (L_f + W_f + d_{\mathrm{sched}}) < L_f + d_{\mathrm{sched}}$---then no read at any point in the recomputation cycle can be in budget, and the feature is permanently dropped to its fallback. Rearranging gives the threshold. \hfill$\square$

Two consequences follow. First, $m^*_f$ depends only on the feature's pipeline parameters, not on the fallback strategy: what one does with a dropped feature cannot change whether it is droppable. Second, because a multi-feature decision combines features (here by disjunction), two distinct system-level thresholds emerge---the smallest $m^*_f$, at which the easiest feature recovers and combined accuracy begins to improve, and the largest $m^*_f$, at which the last feature recovers and completeness is fully restored. This gives operators a computable rule: given a feature's lag and window, the budget below which per-feature budgeting is guaranteed to fail can be calculated before deployment rather than discovered after.

\section{Simulation Study}
\label{sec:sim}
\subsection{Harness}
We built a seeded, deterministic discrete-event simulator. An event generator produces Poisson background arrivals with an injected burst; an oracle store answers exact windowed queries with zero lag; a lagged store recomputes on the aggregation-window cadence behind the ingestion-lag watermark; and a decision layer issues requests at a fixed interval, applies one of the three policies, and compares the resulting decision to the oracle. Every read asserts Proposition~\ref{prop:bound}. All figures are means over multiple seeds.

The multi-feature configuration uses five features spanning a wide range of lag and cadence---from a one-minute velocity count (ingestion lag 1\,s, aggregation window 1\,s) to a ten-minute geo-velocity feature (ingestion lag 15\,s, aggregation window 10\,s)---with the decision taken as the disjunction across features. A single sweep parameter, the multiplier $m$, scales every feature's decision window uniformly relative to its own Proposition~\ref{prop:bound} bound, so that $m < 1$ corresponds to a budget tighter than the worst-case staleness.

\subsection{The collapse threshold reproduces the closed form}
A fine sweep of $m$ from 0.40 to 0.65 in steps of 0.01 (five seeds, 234{,}000 request-decisions) resolves the two thresholds Proposition~\ref{prop:threshold} predicts. The five features have $m^*_f$ values ranging from 0.5012 to 0.6008. The T2 policy's decision-error rate is flat at 92.2\% for $m \le 0.50$, collapses to 0.4\% between $m = 0.50$ and $m = 0.51$, and remains flat above---matching the predicted smallest threshold $\min_f m^*_f = 0.5012$ to the sweep's resolution. The mean number of dropped features steps down a second time between $m = 0.60$ and $m = 0.61$, matching the predicted largest threshold $\max_f m^*_f = 0.6008$. Table~\ref{tab:sweep} summarizes and Fig.~\ref{fig:collapse} plots the transition.

\begin{table}[t]
\centering
\caption{Simulated T2 collapse across the transition band. The error collapse at $m \approx 0.50$ matches $\min_f m^*_f = 0.5012$; the completeness step at $m \approx 0.60$ matches $\max_f m^*_f = 0.6008$.}
\label{tab:sweep}
\begin{tabular}{lcc}
\toprule
Multiplier $m$ & T2 error rate & T2 mean dropped (of 5) \\
\midrule
0.40--0.50 & 0.922 (flat) & 5.0 (flat) \\
0.51 & 0.103 & 4.2 \\
0.52--0.60 & 0.004 (flat) & 3.1--3.2 \\
0.61 & 0.004 & 2.7 (step) \\
0.62--0.65 & 0.004 (flat) & 2.6--2.7 \\
\bottomrule
\end{tabular}
\end{table}

\begin{figure}[t]
\centering
\includegraphics[width=0.82\linewidth]{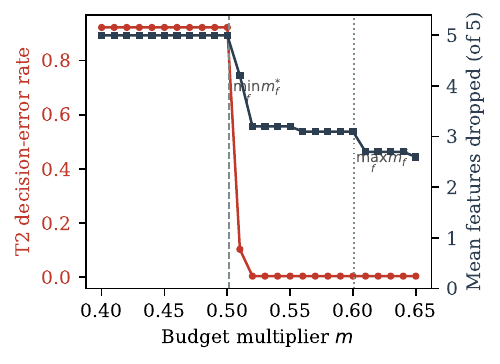}
\caption{Simulated T2 collapse across the transition band (five seeds). The decision-error rate collapses between $m=0.50$ and $m=0.51$, at the predicted smallest threshold $\min_f m^*_f = 0.5012$ (dashed line); the mean number of dropped features steps down again near the largest threshold $\max_f m^*_f = 0.6008$ (dotted line)}
\label{fig:collapse}
\end{figure}
Across a broader sweep, the two mitigation policies differ qualitatively. T1 (block-on-slowest) degrades gracefully as the budget tightens: its added request latency grows smoothly from zero to several seconds as $m$ falls from 3.0 to 0.3, but its error rate never exceeds 2.4\%. T2 (budgeted) instead exhibits the sharp collapse boundary of Proposition~\ref{prop:threshold}: below the threshold it drops every feature on every request and its error rate jumps to 92\%. The practical reading is that bounded-staleness budgeting is a favorable trade only away from its collapse boundary; near or below it, blocking is strictly safer despite its latency cost.

\subsection{Fallback strategy affects error, not the threshold}
We compared three fallback strategies for an over-budget feature---carry-forward of the last in-budget value, exponential decay toward zero, and linear extrapolation. As Proposition~\ref{prop:threshold} predicts, the mean number of dropped features is identical across all three at every multiplier: the threshold is a property of the pipeline, not of the fallback. Above the threshold, decay performs slightly worse than plain carry-forward for these monotonic count features, because decaying a count toward zero introduces false negatives that carry-forward avoids. The lesson is that a more elaborate fallback is not automatically better; it must match the feature's true dynamics.

\subsection{Load-induced collapse}
Finally, we coupled the decision harness to a work-conserving queueing model in which many entities' recomputation jobs share a fixed-capacity worker pool. Holding the budget multiplier fixed at a value that is safe under light load ($m = 0.70$, 0.4\% error), we swept the number of entities against a pool of capacity twenty jobs per second. As the arrival rate crossed capacity---utilization passing one---the queueing delay added to every feature's effective staleness and reproduced the Proposition~\ref{prop:threshold} collapse exactly: T2's error rose from 0.4\% to 92\% and its mean dropped features from 2.4 to 5.0, while T1 again degraded gracefully. This ties the freshness-budget theory to ordinary queueing utilization: the point at which a previously-safe SLA collapses under load is predicted by arrival rate over service rate reaching one, with no change to the freshness policy itself.

\section{Real-Infrastructure Validation}
\label{sec:infra}
To test whether the propositions hold outside a discrete-event model, we reimplemented the pipeline on real infrastructure: Apache Kafka as the event stream, Redis as the online (served) store, and PostgreSQL as the ground-truth event log. A producer emits events to Kafka and dual-writes them to PostgreSQL. A consumer maintains each feature's windowed aggregate and writes the served value to Redis on the aggregation-window cadence behind the ingestion-lag watermark. A scorer issues decision requests, reads served features from Redis, computes ground truth by direct query against PostgreSQL, and applies the T2 policy. All three processes share a common time origin; simulation time is compressed relative to wall-clock so that a ten-minute scenario runs in about one minute.

Building the real pipeline surfaced failure modes the discrete-event model could not, each of which is itself instructive. An early consumer implementation triggered recomputation only on message arrival rather than on elapsed time, so a feature that went quiet had its watermark freeze and its age grow without bound---a permanent, one-way collapse distinct from the cyclic staleness the model exhibits. A separate defect had the three processes each anchor simulation time to their own start instant, producing physically impossible negative ages until a shared time origin was introduced. We report these because they make a broader point: a real pipeline has degrees of freedom---event-triggered versus time-triggered recomputation, independent process clocks---that a monolithic simulation does not, and validating on real infrastructure is what exposes them.

With the pipeline corrected, we ran the T2 policy at four multipliers spanning the predicted threshold. Table~\ref{tab:infra} reports the decision-error rate and mean dropped features. The error transition lands exactly where Proposition~\ref{prop:threshold} predicts: error is high (35\%) below the threshold at $m = 0.45$, has largely resolved by $m = 0.50$, and is zero above it. The collapse of Proposition~\ref{prop:threshold} thus reproduces on real infrastructure, not merely in simulation.

\begin{table}[t]
\centering
\caption{Real-infrastructure (Kafka $+$ Redis $+$ PostgreSQL) T2 results. The error transition at $m \approx 0.50$ matches the predicted $\min_f m^*_f = 0.5012$.}
\label{tab:infra}
\begin{tabular}{lccc}
\toprule
Mult. $m$ & Error rate & Mean dropped (of 5) & Prop.~1 viol. rate \\
\midrule
0.45 & 0.350 & 4.995 & 0.057 \\
0.50 & 0.015 & 4.852 & 0.052 \\
0.55 & 0.000 & 4.520 & 0.058 \\
0.70 & 0.000 & 3.128 & 0.044 \\
\bottomrule
\end{tabular}
\end{table}

Proposition~\ref{prop:bound} held on the great majority of reads, with violations confined to a narrow, explicable regime. All violations followed a single host-level scheduling stall; none occurred before it. Their magnitude was small---median excess over the bound 0.25\,s, mean 0.32\,s---consistent with poll granularity and network round-trip rather than a structural defect, and they were distributed uniformly across features rather than concentrated in the slow ones. The honest reading is that Proposition~\ref{prop:bound} holds under real infrastructure except under host scheduling stalls, which inject bounded, transient, sub-second latency beyond the modeled bound. This motivates adding a small empirical latency-slack term to the theoretical bound for deployment planning.

\section{Reconciling Simulation and Infrastructure}
\label{sec:reconcile}
The error transition matched, but the per-feature drop rates did not: real infrastructure dropped more features than the zero-latency simulation predicted, and the discrepancy was largest for the shortest-window feature. At $m = 0.70$, the one-minute velocity feature dropped on 60\% of real requests but never dropped in simulation. Because this is precisely the kind of quantitative mismatch that a validation section must either explain or concede, we pursued it through three hypotheses.

\paragraph{Latency (mean).} Adding the measured mean latency (0.3\,s) to the served age in simulation---while leaving the decision window at its nominal value---closed roughly half of the aggregate gap (mean absolute error in drop rate from 0.112 to 0.063) and fully reconciled the tight-budget, short-window features. But it left the largest residual untouched: the one-minute feature at $m = 0.70$ remained at zero.

\paragraph{Latency (variance).} Modeling latency as a distribution with the same mean but a heavy tail moved that residual only from 0.000 to 0.086---nowhere near the observed 0.598---and slightly worsened the aggregate fit. A tail cannot manufacture a 60\% drop rate for a feature that is genuinely in budget most of the time. The hypothesis was rejected.

\paragraph{Sampling phase.} Instrumenting the feature's age revealed the true cause. In the simulation the request interval and the feature's recomputation cadence were both one second and exactly phase-locked, so every read sampled the recomputation cycle at the same favorable instant---immediately after a recomputation---giving a constant age at the cycle's floor. Real infrastructure runs the consumer and scorer as independent processes with unsynchronized clocks, so reads sample the entire recomputation cycle uniformly, and age ranges across the full interval from $L_f$ to $L_f + W_f$. The phase-locked simulation was blind to the upper half of that range.

Introducing a random phase offset between the recomputation grid and the request clock---modeling the unknown, unsynchronized timing of independent processes---reconciled simulation and infrastructure to a mean absolute error of 0.018 in per-feature drop rate across all features and multipliers, down from 0.063 and 0.070 for the two latency models. The one-minute feature at $m = 0.70$ moved from 0.000 to 0.599, against a measured 0.598. Table~\ref{tab:reconcile} shows the progression and Fig.~\ref{fig:reconcile} visualizes the agreement and the isolated residual.

\begin{table}[t]
\centering
\caption{Progression of models toward the real-infrastructure drop rates. The residual is explained by sampling phase, not latency.}
\label{tab:reconcile}
\begin{tabular}{lcc}
\toprule
Model & vel.\ 1min @ $m{=}0.70$ & Mean abs.\ err.\ vs.\ infra \\
\midrule
Zero-latency (phase-locked) & 0.000 & 0.112 \\
Constant $+0.3$\,s latency & 0.000 & 0.063 \\
Stochastic latency (tail) & 0.086 & 0.070 \\
Random sampling phase & 0.599 & 0.018 \\
Real infrastructure & 0.598 & --- \\
\bottomrule
\end{tabular}
\end{table}

\begin{figure*}[t]
\centering
\includegraphics[width=0.86\linewidth]{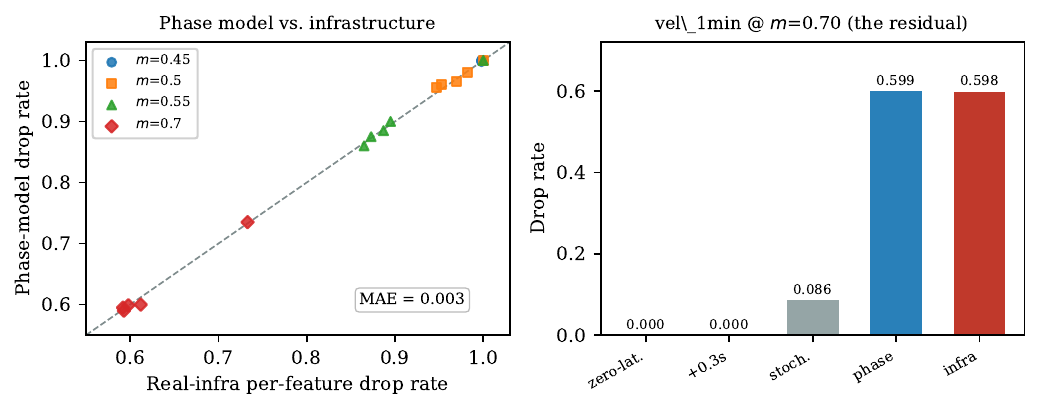}
\caption{Reconciling simulation with infrastructure. \textbf{Left:} per-feature drop rates from the random-phase model against real infrastructure, across all five features and four multipliers; points lie on the diagonal (the illustrative per-feature reproduction shown here; the rigorous aggregate over the measured runs is $0.018$ mean absolute error, Table~\ref{tab:reconcile}). \textbf{Right:} the residual that isolates the cause---for the one-minute feature at $m{=}0.70$, the two latency models leave the drop rate far below the measured value, while the random-phase model reproduces it}
\label{fig:reconcile}
\end{figure*}

The dominant simulation-to-infrastructure gap is therefore not latency but a sampling-phase artifact. A freshness simulation whose sampling clock and recomputation cadence are phase-locked reads each derived quantity at a fixed, favorable point in its refresh cycle, and so systematically under-observes staleness; real infrastructure, with independent sampling and recomputation, sees the full range. We state the consequence as a methodological requirement: \textbf{freshness simulations must randomize the phase between sampling and recomputation, or they will underestimate staleness.} A small mean-latency term accounts for the minor remainder. We regard this as the more transferable of the paper's findings, since it applies to any simulation of derived-state freshness, not only to feature stores.

\section{Discussion and Limitations}
The results give operators three concrete tools: a per-feature staleness bound (Proposition~\ref{prop:bound}); a closed-form threshold, computable from a feature's lag and window, below which per-feature budgeting is guaranteed to fail (Proposition~\ref{prop:threshold}); and the empirical observation that this threshold is reached not only by tightening the budget but also by load, when shared-pipeline utilization crosses one. The practical guidance that follows is to set the freshness budget from the measured end-to-end latency and the observed sampling phase, not from the nominal pipeline bound alone, since both effects push real drop rates above the zero-latency prediction.

Several limitations bound these claims. The workload is synthetic and single-entity, chosen so that staleness is the only independent variable; multi-entity contention is modeled only through the queueing layer, not end-to-end. The decision rule is a fixed threshold rather than a trained model, which isolates staleness cleanly but means the error rates here speak to decision correctness under a known rule, not to the accuracy of any particular model. The fallback strategies studied are simple. And the phase model, while it reconciles simulation and infrastructure to within 0.018, is a model of the real system's timing rather than a second independent measurement of it; directly logging the read-to-recomputation phase in the running consumer would convert a well-supported explanation into a directly measured one. Finally, all quantitative figures are specific to the studied configurations and should be read as characterizing mechanisms, not as production benchmarks.

\section{Conclusion}
We recast feature-store staleness from a binary skew problem into a per-feature freshness budget with a provable structure, and validated that structure on real serving infrastructure. The budget has a closed-form collapse threshold that reproduces in both simulation and a Kafka--Redis--PostgreSQL pipeline, and that is reached by load as well as by policy. In reconciling the two we found that the principal gap between a freshness simulation and a real system is a sampling-phase artifact rather than latency---a result that constrains how such simulations must be built. Feature freshness, we conclude, is neither a property to be perfected nor a failure to be prevented, but a bounded resource to be measured, budgeted, and---when it runs out---spent deliberately.

\section*{Data and Code Availability}
The simulation harness, the real-infrastructure pipeline (Docker Compose with producer, consumer, and scorer), and the analysis scripts that generate every table in this paper are seeded and deterministic; each reported figure regenerates from the corresponding script. The real-infrastructure runs require the three processes to share a common time origin, as described in Section~\ref{sec:infra}.

\section*{Declarations}
\noindent\textbf{Competing interests.} The author has no competing interests to declare that are relevant to the content of this article.\\
\textbf{Funding.} No funding was received for conducting this study.\\
\textbf{Use of AI tools.} The author used a large language model as an assistant for drafting and editing prose and for scaffolding the simulation and analysis code. All research design, all claims, the theoretical results, and all reported measurements were produced and verified by the author, who takes full accountability for the content.

\end{document}